\begin{document}

\title{McEliece cryptosystem based on Plotkin construction with QC-MDPC and QC-LDPC codes. }

\author{Belkacem Imine \\
Laboratory of Coding and Security of Information\\
University of Science and Technology of\\ Oran (USTO), Algeria\\
belkacem.imine{@}univ-usto.dz
\and Naima Hadj-Said\\
Laboratory of Coding and Security of Information\\
University of Science and Technology of\\ Oran (USTO), Algeria\\
naima.hadjsaid@univ-usto.dz
\and Adda Ali-Pacha\\
Laboratory of Coding and Security of Information\\
University of Science and Technology of\\ Oran (USTO), Algeria\\
adda.alipacha@univ-usto.dz
 } 
\maketitle
\address{Laboratory of Coding and Security of Information,
University of Science and Technology of Oran (USTO),
BP 1505 El M’Naouer Oran 31036, Algeria. belkacem.imine@univ-usto.dz}
\runninghead{B. Imine, N. Hadj-Said, A. Ali-Pacha}{Plotkin-based McEliece cryptosystem with QC-MDPC and QC-LDPC.  }

\begin{abstract}
In this paper, we propose a new variant of the McEliece cryptosystem using two families of quasi-cyclic codes: low density parity check codes (QC-LDPC) and moderate density parity check codes (QC-MDPC).
Due to the low weight codewords in the dual of LDPC codes, this family of codes is vulnerable to dual code attacks, making it unsuitable for use with the McEliece cryptosystem.
However, this is not the case in our proposal, and it is possible by using the $(U|U+V)$ construction to concatenate LDPC codes with MDPC codes.
We will demonstrate that our proposed cryptosystem can withstand dual code and generic decoding attacks, and that the public key can be reduced by leveraging the quasi-cyclic property and the Plotkin construction. \end{abstract}

\begin{keywords}
Code-based cryptography, QC-MDPC, McEliece cryptosystem, QC-LDPC.
\end{keywords}

%

\section{Introduction}
Cryptography's goal is to make any conversation illegible to everyone except the intended recipients, and in order to do so, complex problems like the hardness of factoring a large integer problem, which is used in the RSA cryptosystem \cite{rivest1978method}, and the discrete logarithm problem, which is used in~\cite{diffie1976new}, must be addressed.
Quantum computers, on the other hand, can solve both of these problems, rendering any cryptographic system that relies on them insecure.
R.J.McEliece proposed a public-key encryption scheme based on the difficulty of decoding a random code in 1978~\cite{mceliece1978public}.

McEliece's cryptosystem is best suited for future applications as an alternative to cryptosystems that rely on number theory problems because decoding any large code is an NP-complete problem~\cite{berlekamp1978inherent}.
At first, it was suggested that binary Goppa codes be used as the underlying code in McEliece's cryptosystem~\cite{mceliece1978public}, but the main disadvantage is that the public key size is large, making it unusable in practice.
After several attempts to reduce the size of the public key, H. Niederreiter proposed the dual code as a public key~\cite{niederreiter1986knapsack}.
In~\cite{sidelnikov1994public}, V.M. Sidelnikov proposed a new variant of the McEliece cryptosystem based on algebraic-geometric codes, replacing binary Goppa codes with binary Reed-Muller codes, and H.Janwa and Moreno proposed a new variant of the McEliece cryptosystem based on algebraic-geometric codes, replacing binary Goppa codes with binary Reed-Muller codes~\cite{janwa1996mceliece}. 

Monico et al. proposed the use of LDPC codes in~\cite{monico2000using}.
Gaborit proposed in~\cite{Gaborit2005Shorter} a method for reducing key size by utilizing the (QC) property of a linear code.
The QC-LDPC codes~\cite{baldi2007qc} were proposed by Baldi et al., and the QC-MDPC codes were proposed by Misoczki et al. in \cite{misoczki2013mdpc}.
With the exception of~\cite{mceliece1978public} and~\cite{misoczki2013mdpc}, none of these attempts were successful. 

Moufek et al. proposed a technique in~\cite{moufek2016new} that involved juxtaposing QC-LDPC and QC-MDPC codes, but an efficient attack was described in~\cite{dragoi2017cryptanalysis}.
The idea behind the~\cite{dragoi2017cryptanalysis} attack is to build the parity-check matrix of the public QC-LDPC code because the dual code contains very low-weight codewords and finding them using information set decoding attack~\cite{stern1988method} or~\cite{leon1988probabilistic} is not difficult, then find an alternative permutation matrix that allows to construct an equivalent QC-LDPC code that can decode the ciphertext with a high probability.
\subsection{Motivation and contribution}
\begin{itemize}
    \item \textbf{Increasing the transmission rate of the~\cite{moufek2016new} cryptosystem}: In~\cite{moufek2016new}, Moufek et al. have used the QC-LDPC and QC-MDPC codes together in juxtaposition to achieve their results.
Given the generator matrices $G_1$ of the QC-LDPC code $C_1$ and $G_2$ of the QC-MDPC code $C_2$, with lengths $n_1$ and $n_2$, respectively, the public generator matrix is $G_p=S [G_1 | G_2] P$, where $S$ is a $k$-dimensional square, dense and non-singular matrix, and $P$ is a permutation matrix of length $n_1+n_2$. If $m$ is the message to encrypt of length $k$, then the associated ciphertext is $c=mG_p+e$, such that $e$ is a binary vector with weight $t_1$ in the first $n_1$ positions and weight $t_2$ in the last $n_2$ positions. As we can see, the transmission rate $R=k/(n_1+n_2)$ is extremely low.
However, due to the Plotkin construction, the transmission rate $R'=2k/(n_1+n_2)$ in our proposal is greater than $R$.  
\item \textbf{Enhancing the security level of the~\cite{moufek2016new} cryptosystem}: The authors in~\cite{dragoi2017cryptanalysis} have succeeded in deciphering the encrypted message $c$ by first targeting the secret QC-LDPC code and finding an equivalent parity check matrix. Finally, once an equivalent matrix for the QC-LDPC code is found, with high probability the ciphertext can be easily decrypted without targeting the MDPC code. Even though an adversary can find an equivalent parity check matrix of the QC-LDPC in our proposal, he cannot find the ciphertext. 
\item \textbf{Avoid the weak keys-based message-recovery attack}: It has been demonstrated that the LDPC code-based cryptosystem is insecure because its dual code contains a large number of low-weight codewords~\cite{baldi2007qc, dragoi2017cryptanalysis}.
However, even this undesirable feature of the QC-LDPC code can not endanger our cryptosystem. 
\end{itemize}
\subsection{Paper organization}
The remainder of the paper is organized as follows.
Section~\ref{secPriliminaries} discusses linear codes in general, how to build a QC-LDPC/QC-MDPC code, and the McEliece cryptosystem.
Section~\ref{secProp} describes our cryptosystem in detail, including key generation, encryption, and decryption.
In section~\ref{secSecAnalysis}, we examine the robustness of our cryptosystem by mentioning the most powerful key-recovery attacks against the QC-LDPC code-based and QC-MDPC code-based cryptosystems, as well as the most well-known message-recovery attacks.
We reach a conclusion in section~\ref{conc}.  
\section{PRELIMINARIES}
\label{secPriliminaries}
Let $\mathbb{F}_q$ to be a finite field of $q$ elements, and let $[n,k]_q$ denote the $k$-dimensional linear subspace $C$ of the vector space $\mathbb{F}_q^n$, then the $k$ $q$-ary linearly independent vectors set of $\mathbb{F}_q^n$ that spans $C$ is called the generator matrix of $C$ and denoted by $G$. The code orthogonal of $C$ is called the dual code of $C$ and it is generated by the generator matrix $H$ of rank $n-k$ and length $n$ such that $GH^T=0$. The matrix $H$ is called the parity check matrix of $C$.
\par
Let $v \in \mathbb{F}_q^n$. Then the hamming-weight $(wt)$ of $v$ is the number of non-zero coordinates of $v$, $wt(v)=support \{ {v_i \in \mathbb{F}_q} : v_i \neq 0 \}$. An $[n,k,d]$ code $C$ is an $[n,k]$ code which has a minimum hamming-weight of non-zero codewords $d$ and a hamming-distance between codewords is at least $d$ an it can correct up to $t=(d-1)/2$ errors. The code $C$ is systematic if its generator matrix $G$ is in the systematic form $G= 
  \left[ \begin{array}{@{}c|c@{}}
  I_{k \times k} & P_{k \times (n-k)}\\
\end{array} \right]$, then its parity check matrix $H$ has the form $H= 
  \left[ \begin{array}{@{}c|c@{}}
    -P^T & I_{(n-k) \times (n-k)}\\
\end{array} \right]$.
If there exist an integer $1<n_0<n$ such that every cyclic-shift of $n_0$ positions of any codeword results a valid codeword, then the code $C$ is called quasi-cyclic and if $n_0=1$, then the code is cyclic code.

An $[n,r,w]$-LDPC code is a linear code of length $n$ and co-dimension $r$ characterized by its parity check matrix $H$ whose rows of weight $w=\mathcal{O}(1)$~\cite{gallager1962low}. For the $[n,r,w]$-MDPC code, $w=\mathcal{O}(\sqrt{n})$~\cite{misoczki2013mdpc}. We denote $\Psi_{ldpc/mdpc}$ the $t$-error correcting algorithm of LDPC/MDPC code. The decoding of LDPC/MDPC codes is carried out via the original Bit Flipping Decoder~\cite{gallager1962low}, but several improvements appeared afterwards.

\subsection{QC-LDPC/QC-MDPC code construction}
The construction of the [n,r,w]-QC-LDPC/QC-MDPC code is based on the construction of the parity check matrix $H$ of length $n=rn_0$ and row-weight $w$. There are several techniques to construct $H$ including the "circulants row" technique where the matrix $H$ is formed by $n_0$ ($r \times r$)-circulant matrices $H=[H_0 H_1 .. H_{n_0-1}]$ such that $H_{n_0-1}$ is non-singular and $w= \sum \limits_{\underset{}{i=0}}^{n_0-1} w_i$. The corresponding generator matrix G has the form:
\begin{equation}
  G= 
  \left[ \begin{array}{@{}c|c@{}}
  I_{(n-r) \times (n-r)} &
   \begin{matrix}
      &  & (H_{n_0-1}^{-1} H_0)^T\\
      &  & (H_{n_0-1}^{-1} H_1)^T\\
      &  &\vdots\\
      &  & (H_{n_0-1}^{-1} H_{n_0-2})^T\\
   \end{matrix} \\
      
\end{array} \right]
  \label{eq:qcgen}
\end{equation}

\subsection{Error correcting capability of  LDPC code}
Due to the fact that LDPC codes are characterized by their sparse parity check matrix, they can achieve very good error correction capability, particularly when the decoder deals with soft-information inputs; in this case, the decoder is referred to as a soft-information BF decoder~\cite{richardson2001capacity}.
The McEliece cryptosystem has binary inputs and no soft-information inputs; thus, the decoder is known as a hard-decision BF decoder~\cite{gallager1962low}.
It is, however, impossible to achieve a zero Decoding Failure Rate (DFR).
The LDPC decoder is an iterative decoder, unlike to the bounded distance decoder.
The error correction capability of the LDPC decoder is determined by numerical simulation estimation of the DFR~\cite{xiao2007estimation, xiao2009error}, and no efficient algorithm can provide the exact value of the error correction capability of LDPC codes. 

\subsection{McEliece cryptosystem}
The McEliece cryptosystem~\cite{mceliece1978public} is a public-key cryptosystem based on the hardness of distinguishing a t-error correcting binary Goppa code $C$ whose the generator matrix $G \in \mathbb{F}_q^{k \times n}$ from a random code, and the hardness of decoding a random code.
\begin{itemize}
\item The public key ($G'=SGP$, $t$), such that $S \in \mathbb{F}_2^{k \times k}$ denotes an invertible scrambling matrix, and $P \in \mathbb{F}_2^{n \times n}$ represents a permutation matrix.
\item The secret key ($S,G,P$).
\end{itemize}
\subsubsection{Encryption}
Using the public key ($G'$,$t$), the transmitter encrypts the plaintext $x \in \mathbb{F}_2^k$ into the ciphertext $c \in \mathbb{F}_2^n$ as follows:\\
 $c=xG'+z$, such that $z \in \mathbb{F}_2^n$ and $wt(z) \leq t$.
\subsubsection{Decryption}
Using the t-error correcting algorithm $\Psi_{Goppa}$ of $C$, the receiver deciphers $c$ as follows:
\begin{enumerate}
\item $c'=\Psi_{Goppa}(cP^{-1})=mSG$, which is a valid codeword, then extract $m'=mS$ from $c'$.
\item $m=m'S^{-1}=(mS)S^{-1}$.
\end{enumerate}

\section{The proposed cryptosystem}
\label{secProp}
\subsection{Keys generation}
\begin{enumerate}
    \item  Choose an [$n,r,w_1$] QC-MDPC code $C_1$ of length $n$ and co-dimension $r$ that can correct $t_1$ errors, then from its parity check matrix $H_1$ generate its systematic generator matrix $G_1$, such that $r$ must be odd.
    \item  Choose an [$n,r,w_2$] QC-LDPC code $C_2$ of length $n$ and co-dimension $r$ that can correct $t_2$ errors, then from its parity check matrix $H_2$  generate its systematic generator matrix $G_2$.
    \item  Compute $G=\begin{bmatrix}
G_1 & G_1 \\
0 & G_2
\end{bmatrix}$, such that $G_1$ and $G_2$ have the same length $n=n_0r$ and the same row rank $k=n-r=k_0r$.
	\item Generate an invertible scrambling matrix $S \in \mathbb{F}_2^{k \times k}$, such that $S$ must be a dense matrix and it is formed by $(k_0 \times k_0)$ circulant blocks and each block has a size $r$.
	\item Compute $S'=\begin{bmatrix}
S & 0 \\
0 & S
\end{bmatrix}$.
	\item Compute 
	$G'=S'G=
\left[ \begin{array}{@{}c|c@{}}
  \begin{matrix}
SG_1 \\
0 
\end{matrix} &
\begin{matrix}
SG_1 \\
SG_2 
\end{matrix}  \\
\end{array} \right]	
	$ 
\end{enumerate}
\paragraph{The public key}  $\{G', t_1, t_2\}$.
\paragraph{The secret key}  $\{S',H_1, H_2\}$
\begin{remark}
For the storage of the public key $G'$ we can store just the right side of $G'$.
\end{remark}
 \subsection{Encryption}
 \label{sec:12}
 \begin{enumerate}
 \item Generate a binary messages $m=[m_1|m_2]$, such that $m_1 \in \mathbb{F}_2^k$ and $m_2 \in \mathbb{F}_2^k $.
 \item Randomly, generate a binary error vector $z=[z_1|z_2] \in \mathbb{F}_2^{2n}$ of weight $t_1$ in the left $n$ positions and $t_2$ in the right $n$ positions.
 \item $c= mG'+z+[(0...0)|h(z_1)]$ where $h(.)$ is a hash function and $h(z_1) \in \mathbb{F}_2^{n}$.
 \end{enumerate}
 
 \subsection{Decryption}
 \label{sec:13}
  \begin{enumerate}
 \item When the receiver receives the ciphertext $c$, he decomposes it into two blocks of same length $c=[c_1|c_2]$. 
 \item Using the decoding algorithm~\cite{drucker2019constant} we find
$c_1'= \Psi_{ldpc/mdpc}(c_1)=m_1SG_1$ and $z_1$.
\item Compute $c_2+c_1'+h(z_1)=(m_1SG_1+m_2SG_2+z_2+h(z_1))+m_1SG_1+h(z_1)=m_2SG_2+z_2$, then using the decoding algorithm~\cite{drucker2019constant} we find
$c_2'= \Psi_{ldpc/mdpc}(c_2+c_1'+h(z_1))=m_2SG_2$.
 \item Then extract $m_1'=m_1S$ from $c_1'$ and compute $m_1=m_1'S^{-1}$.
 \item Then extract $m_2'=m_2S$ from $c_2'$ and compute $m_2=m_2'S^{-1}$.
 \item Then, combining the two results $m=[m_1|m_2]$.
 \end{enumerate}

\section{Security analysis}
\label{secSecAnalysis}
\subsection{key recovery attacks}
\subsubsection{Weak key attack}
The attacker can perform a key recovery attack to obtain an equivalent parity check matrix $H_2'$ with row-weight $w_2$. The complexity of such attack depends on the row-weight $w_2$ and the number of rows in $H_2$ with row-weight $w_2$ as shown in~\cite{misoczki2013mdpc,dragoi2017cryptanalysis}. The attacker uses the public $SG_2$ matrix to derive the corresponding parity check matrix $H_2^*$, which is a valid parity check matrix but not a low weight matrix, as follows: 
\begin{enumerate}
    \item Compute $G_{2}^*=\left[ \begin{array}{@{}c|c@{}}
  I_{k \times k} & A
\end{array} \right]$ the systematic form of $SG_2$. 
\item Compute the corresponding parity check matrix $H_2^*=\left[ \begin{array}{@{}c|c@{}}
  A^T & I_{(n-k) \times (n-k)} 
\end{array} \right]$
\end{enumerate}
\begin{proposition}
The parity check matrix $H'=\left( \begin{matrix}
H_1^* & 0 \\
H_2^* & H_2^*
\end{matrix}\right)$ is a valid parity check matrix of the code generated by $G'$, such that $H_1^*$ is the parity check matrix of the QC-MDPC code $C_1$ and $H_2^*$ is the parity check matrix of the QC-LDPC code $C_2$, however $H_1^*$ and $H_2^*$ are not a low weight parity-check matrices.
\end{proposition}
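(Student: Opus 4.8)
The plan is to verify the two defining properties of a parity‑check matrix for the code $\mathcal{C}'$ generated by $G'$: the orthogonality relation $G'H'^{T}=0$, and the dimension count $\operatorname{rank}(H')=2n-\operatorname{rank}(G')$, so that $H'$ spans the full dual of $\mathcal{C}'$ and not just a subspace of it. The ingredients I would use are already available from the construction: $H_i^{*}$ is, by its very definition, a parity‑check matrix of the code spanned by the rows of $SG_i$, so $SG_1(H_1^{*})^{T}=0$ and $SG_2(H_2^{*})^{T}=0$; moreover, since $G_i$ is systematic and $S$ is invertible, $SG_i$ has rank $k$, hence each $H_i^{*}$ has full rank $r=n-k$ (it contains an $I_r$ block).

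For the orthogonality I would simply expand the block product
\[
G'H'^{T}=\begin{bmatrix} SG_1 & SG_1 \\ 0 & SG_2 \end{bmatrix}\begin{pmatrix} (H_1^{*})^{T} & (H_2^{*})^{T} \\ 0 & (H_2^{*})^{T} \end{pmatrix}.
\]
Its top‑left block is $SG_1(H_1^{*})^{T}=0$; its top‑right block is $SG_1(H_2^{*})^{T}+SG_1(H_2^{*})^{T}$, which vanishes over $\mathbb{F}_2$ because the two identical copies cancel — this is precisely where the $(U\mid U+V)$ shape is used, and where working in characteristic $2$ is essential; its bottom‑left block is $0$ trivially; and its bottom‑right block is $SG_2(H_2^{*})^{T}=0$. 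Hence $G'H'^{T}=0$.

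For the dimension count I would first check $\operatorname{rank}(G')=2k$: the $k$ rows $[SG_1\mid SG_1]$ are independent because $SG_1$ has rank $k$, the $k$ rows $[0\mid SG_2]$ are independent because $SG_2$ has rank $k$, and a relation $a[SG_1\mid SG_1]+b[0\mid SG_2]=0$ forces $aSG_1=0$ from the left half, hence $a=0$, and then $bSG_2=0$, hence $b=0$. So $\mathcal{C}'$ has length $2n$ and co‑dimension $2n-2k=2r$. The symmetric argument applied to $H'$ — read off the right half of $a[H_1^{*}\mid 0]+b[H_2^{*}\mid H_2^{*}]=0$ to get $bH_2^{*}=0$, hence $b=0$, then $aH_1^{*}=0$, hence $a=0$ — gives $\operatorname{rank}(H')=2r$. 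A $2r\times 2n$ matrix of rank $2r$ that annihilates $G'$ on the right is a parity‑check matrix of $\mathcal{C}'$, which is the claim.

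The final assertion, that $H_1^{*}$ and $H_2^{*}$ are not low‑weight, is an observation about the construction rather than a computation: each $H_i^{*}$ equals $[A_i^{T}\mid I_r]$ with $A_i$ the non‑identity part of the systematic form of $SG_i$, and that systematic form is dense, so the rows of $H_i^{*}$ have weight of order $k$ instead of $\mathcal{O}(1)$ or $\mathcal{O}(\sqrt{n})$. I do not anticipate a genuine obstacle here; the only points that need care are the characteristic‑$2$ cancellation in the top‑right block and keeping the rank bookkeeping clean, so that ``valid parity‑check matrix'' is established in the strong sense (rank equal to the co‑dimension) rather than merely as the relation $G'H'^{T}=0$.
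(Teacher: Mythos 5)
Your proof is correct and follows essentially the same route as the paper, whose entire argument is the single block computation $G'H'^{T}=0$; your expansion of that product, including the characteristic-$2$ cancellation $SG_1(H_2^{*})^{T}+SG_1(H_2^{*})^{T}=0$ in the top-right block, is exactly what the paper's one-line display leaves implicit. The additional rank bookkeeping ($\operatorname{rank}(G')=2k$, $\operatorname{rank}(H')=2r$) and the density remark about $H_i^{*}=[A_i^{T}\mid I_r]$ go beyond what the paper writes down, but they only strengthen the same verification rather than take a different path.
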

\begin{proof}
\begin{center}
$G'{H'}^T=\left(
  \begin{matrix}
SG_1 & SG_1\\
0     & SG_2
\end{matrix} 
 \right) {\left( \begin{matrix}
H_1^* &  0 \\
H_2^* &  H_2^*
\end{matrix}\right)}^T=0$.
\end{center}
\end{proof}

Since the parity check matrix $H_2$ has $r$ rows of weight $w_2$, the attacker must employ one of the Information Set Decoding algorithms~\cite{lee1988observation,leon1988probabilistic,stern1988method,dumer1991minimum,becker2012decoding} to determine $r$ codewords of weight $w_2$ in the dual code of $C_2$. 
\par
In~\cite{sendrier2011decoding}, the author introduced a technique called "Decoding One Out of Many (DOOM)" which aims to gain a factor $g=(r/\sqrt{N_i})$ when $N_i$ instances are treated simultaneously and the system has $r$ solutions, but only one solution is sufficient. In our proposal, the target code $C_2$ is a quasi-cyclic code of co-dimension $r$. The attacker applies the ISD attack to find one codeword $v$ of weight $w_2$ in $C_2^\perp$ (i.e. $H_2^* v^T=0$); the time complexity to execute this type of attack is $O(2^{-w_2log_2(r/n)(1+o(1))})$ when $n \to \infty$~\cite{canto2016analysis}. The DOOM technique provides a gain $g=(r/\sqrt{1})$~\cite{misoczki2013mdpc}, so the total work factor is $WF_{rec}(n, k, w_2)=\frac{Cost(ISD(n,k,w_2))}{r}$. In the sequel, we will show that even if an attacker finds an  equivalent parity-check matrix for the QC-LDPC code, he can not decode $c_2$.
\par
It is worth noting that the attacker must perform both a reaction attack and a weak key attack in order to recover the whole key.
\subsubsection{Reaction attack}
QC-MDPC codes are at the heart of the BIKE submission~\cite{Abb}, a key encapsulation mechanism (KEM) that also reached the 4th round of the NIST competition for post quantum cryptography. It is true that several recent papers address the security of the QC-MDPC encryption schemes : In 2016, Guo, Johansson and Stankovski presented a key recovery (reaction) attack~\cite{guo2016key, guo2018key}, using the fact that the scheme is imperfectly correct. They identify a dependence between the secret key and the failure in decryption, and use that in the attack to build a distance spectrum for the secret key. The latest set is then used to build the secret key. The interesting fact for this attack is that it could be applied on the CCA version of the QC-MDPC scheme and still be successful. These type of attacks also applies on LDPC encryption schemes~\cite{fabvsivc2017reaction}, and open the question of improving the decoding algorithm for LDPC/MDPC codes in order to considerably reduce the decryption failure probability in QC-MDPC or QC-LDPC schemes as it is the only way for a countermeasure. So as we are talking about decryption of QC-LDPC/QC-MDPC in this scheme, it is also vulnerable to such attacks and the countermeasure is not in the design of the scheme, but in the decoding algoritnm that will be used for the QC-MDPC code. So the scheme could be secure assuming that there is a decoding algorithm for QC-MDPC that has a very low Decoding Failure Rate (DFR$ \leq 2^{-\lambda}$ for a security level $\lambda$~\cite{baldi2020analysis}), and a DFR$\approx 10^{-7}$ for the QC-LDPC which is adequate  for a practical purpose.

However, several recent works~\cite{sendrier2019decoding, drucker2019constant, sendrier2019low, drucker2020qc} then follow this direction in other to improve the security of MDPC schemes, specially with CCA security. Note that a first timing attack was also proposed in~\cite{eaton2018qc}. A recent class of weak keys for the QC-MDPC encryption scheme is also studied in~\cite{aydin2020class}.
The authors of~\cite{sendrier2019low} have demonstrated that by increasing the block size $r$ slightly (by $5\%$ to $10\%$ ) while keeping $w$ and $t$ and implementing a decoder with a negligible DFR, it is possible to achieve IND-CCA security in the KEM based on ($2r, r, w, t$)-QC-MDPC code. 
\par
The key security in our proposal is based on the difficulty of recovering the QC-MDPC code's parity-check matrix.
For a security level $\lambda$, a decoder with DFR$\leq 2^{-\lambda}$ is required~\cite{baldi2020analysis}; this is achieved by using the Backflip decoder~\cite{drucker2019constant} described  in BIKE-CCA KEM~\cite{Abb}. 

\subsection{Message recovery attacks}
\subsubsection{Decoding attack based on a weak key }
In~\cite{dragoi2017cryptanalysis} the authors showed that the security of the plaintext in the cryptosystem described in~\cite{moufek2016new} is based on the cost of finding $r$ vectors with low weights in the dual of the public QC-LDPC code. However, in our proposal this cannot happen even if an adversary successes to construct an equivalent parity check matrix of the private QC-LDPC code. Since the dual code of $C_2$ has many low-weight vectors, the adversary can find out its sparse parity check matrix $H_2$ using the attack described in~\cite{dragoi2017cryptanalysis}; but he cannot decode $c_2=m_1SG_1+m_2SG_2+z_2+h(z_1)$ since the expected number of errors is more than $t_2$. The adversary can also try to find $m_2$ by computing $c_3=c_1+c_2$ and apply the decoding algorithm of the QC-LDPC code to $c_3=m_2SG_2+z_2+z_1+h(z_1)$; however, the process will fail since the weight of $z_2+z_1+h(z_1)$ is, with a very high probability, large compared to the QC-LDPC correction capability assuming that the hash function $h$ is secure. Also remark that, in case the hash function exceptionally outputs a low weight vector $h(z_1)$, the weight of $z_2+z_1+h(z_1)$ will always be large compared to the QC-LDPC correction capability.  

\subsubsection{Information set decoding attack}
If $k$ and $n$ are the code dimension and the code length of the public code respectively, then knowing $k$ error-free bits from the ciphertext $c$ is enough to decrypt $c$ easily. However, finding out $k$ error-free bits is not easy and this is equivalent to decode a random code which turned out to be NP-complete~\cite{berlekamp1978inherent}. The most efficient ISD attack is executed in exponential time. In  this paper,  we  consider  the BJMM attack~\cite{becker2012decoding} to analyze the security level of our cryptosystem. In the case of our proposal, the adversary recovers the plaintext if he successfully decodes the QC-MDPC i.e. recovers $z_1$. Since the code is quasi-cyclic the required work factor can be minimized to $\frac{Cost(ISD_{BJMM})}{\sqrt{r}}$~\cite{sendrier2011decoding}. The total work factor $WF=\frac{Cost(ISD_{BJMM}(n,k,t1))}{\sqrt{r}}$. If we consider the suggested parameters~\cite{sendrier2019low} $(n=23558, k=11779, n_0=2, w_1=142, t_1=134)$ for the QC-MDPC code, then $WF=2^{128.9}$. Achieving this security level using these suggested parameters needs to store only $2(n_0-1).n_0.r$ bits$\approx 5.751$ Kbytes. Using some CCA2-secure conversions allows to store only $2r$ bits $\approx 2.876$ Kbytes. The security level depends solely of the cost of decoding the QC-MDPC regardless of the QC-LDPC; \tablename~\ref{codeParameters} shows the recommended code parameters for the three commonly used security levels (128-bit, 192-bit, and 256-bit) IND-CCA and IND-CPA variants. The following QC-MDPC code parameters~\cite{sendrier2019low} are recommended for BIKE submission~\cite{Abb}. For the ($n, r, w_2$)-QC-LDPC, $t_2$ must be chosen so that the DFR$\approx 2^{-7}$.

\begin{table}[htbp]\small
\caption{Our proposal's security exponent, the recommended QC-MDPC code parameters~\cite{sendrier2019low}, and the recommended QC-LDPC code parameters.}
\label{codeParameters}
\begin{center}
\begin{tabular}{ |c|c|c| } 
\hline
Security level & ($n, r, w_1, t_1$)-QC-MDPC & ($n, r, w_2$)-QC-LDPC\\
\hline
128-bit IND-CCA & $(23558, 11779, 142, 134)$ & $(23558, 11779, 14)$ \\ 
128-bit IND-CPA & $(20326, 10163, 142, 134)$ & $(20326, 10163, 14)$  \\ 
\hline
192-bit IND-CCA & $(49642, 24821, 206, 199)$ & $(49642, 24821, 15)$ \\ 
192-bit IND-CPA & $(39706, 19853, 206, 199)$ & $(39706, 19853, 15)$  \\ 
\hline
256-bit IND-CCA & $(81194, 40597, 274, 264)$ & $(81194, 40597, 15)$ \\ 
256-bit IND-CPA & $(65498, 32749, 274, 264)$ & $(65498, 32749, 15)$  \\ 
\hline
\end{tabular}
\end{center}
\end{table}

 \section{CONCLUSION}
 \label{conc}
 In this work, we have proposed a new McEliece cryptosystem scheme based on QC-MDPC and QC-LDPC codes using $(U|U+V)$ construction. We have proven that it is possible to use LDPC codes in the McEliece cryptosystem thanks to the $(U|U+V)$ construction. We have shown that our proposal is secure even if the QC-LDPC's parity check matrix is known, and even if weak keys attacks and reaction attacks are applied.


%




\bibliographystyle{fundam}
\bibliography{citations}

\end{document}